\documentclass{amsart}

\usepackage{amssymb,amsmath,tikz}
\usetikzlibrary{positioning}
\usetikzlibrary{calc}
\usetikzlibrary{decorations.markings}

\newcommand{\id}{\mathrm{id}}

\newcommand{\C}{\mathbb{C}}
\newcommand{\R}{\mathbb{R}}

\newcommand{\T}{\mathbb{T}}
\newtheorem{theorem}{Theorem}

\theoremstyle{example}

\title{On realizations of Pachner moves in 4D}
\author{Rinat Kashaev}
\address{Section de math\'ematiques, Universit\'e de Gen\`eve,
2-4 rue du Li\`evre, 1211 Gen\`eve 4, Suisse\\}
\email{rinat.kashaev@unige.ch}
\date{April 8, 2015}

\thanks{Supported in part by Swiss National Science Foundation}
\begin{document}
\begin{abstract} 
The combinatorial structure of Pachner moves in four dimensions is analyzed in the case of a distinguished move of the type $(3,3)$ and few examples of solutions are reviewed. In particular, solutions associated to Pontryagin self-dual locally compact abelian groups are characterized with remarkable symmetry properties which, in the case of finite abelian groups, give rise to a simple model of combinatorial TQFT with corners in four dimensions. 
\end{abstract}
\maketitle
\section{Introduction}
The famous theorem of Pachner \cite{MR1095161} states that arbitrary triangulations of one and the same  piecewise linear (PL) manifold can be related by a finite sequence  of elementary moves of finitely many types known as  \emph{Pachner or bi-stellar moves} \cite{MR1734414}. In analogy with the Reidemeister theorem in knot theory, this result gives a combinatorial framework for constructing invariants of  PL-manifolds as well as PL topological quantum field theories (TQFT) with corners, provided one realizes the Pachner moves algebraically. In three dimensions, a first attempt  of using this scheme has been undertaken in the Regge--Ponzano model~\cite{PonzanoRegge1968}, where the Pachner moves are realized algebraically in terms of the $6j$-symbols within the quantum theory of the angular momentum. Subsequent developments have resulted in  the Turaev--Viro TQFT model~\cite{MR1191386} and its generalizations based on the theory of linear tensor categories~\cite{MR1292673}. In four dimensions, the problem is less developed than in lower dimensions, with few existing realizations \cite{MR1273569, MR1295461, MR1706684, MR1931150, MR3116189}. 

In this paper, following \cite{Kashaev2014(2)}, we review a specific approach to the algebraic realization problem of the Pachner moves in four dimensions. In particular, we discuss distribution valued solutions associated to Pontryagin self-dual locally compact abelian (LCA) groups which are characterized by a specific symmetry property which we call P-symmetry. As shown in \cite{Kashaev2014(2)}, in the case of finite abelian groups, these solutions give rise to a very simple model of TQFT in four dimensions. 

The paper is organized as follows. In Section~\ref{sec1}, we recall the definitions of topolological and combinatorial simplexes and the Pachner moves. In particular, we single out the unique distinguished Pachner move in each dimension. In Section~\ref{sec2}, we first discuss in more detail  the distinguished Pachner move in dimension four, introduce the string diagrammatic notation, and discuss few different interpretations, in particular, in terms of families of Yang--Baxter relations. Then, we describe particular realizations of algebraic, set theoretical and functional analytical natures. In the final Section~\ref{sec3}, we introduce the notion of a P-symmetric vector and prove Theorem~\ref{thm}, which generalizes the solution associated with finite cyclic groups used in \cite{Kashaev2014(2)} for construction of a particularly simple TQFT with corners in four dimensions.

\section{Simplexes and Pachner moves}\label{sec1}

\subsection{Standard topological simplexes}
Recall that the \emph{standard topological $n$-dimensional simplex} is a topological $n$-dimensional ball realized as the following subset of the $(n+1)$-dimensional unit cube:
\begin{equation}
\Delta^n=\left\{\left.(t_0,t_1,\dots,t_n)\in [0,1]^{n+1}\right \vert\ t_0+t_1+\dots+t_n=1\right\},
\end{equation}
so that its boundary 
\(
\partial \Delta^n,
\)
homeomorphic to the $(n-1)$-dimensional sphere, is composed  of $n+1$ homeomorphic images of the standard $(n-1)$-dimensional simplex through the \emph{face maps} 
\begin{multline}\label{fm}
f_i\colon\Delta^{n-1}\to \Delta^n,\quad (t_0,t_1,\dots,t_{n-1})\mapsto (t_0,\dots,t_{i-1},0,t_i,\dots,t_{n-1}), 
\\
 i\in
 [n]\equiv\{0,1,\dots,n\},
\end{multline}
which satisfy the relations
\begin{equation}\label{cfm}
f_i\circ f_j=f_{j+1}\circ f_i\ \text{ if }\  i\le j.
\end{equation}
The following picture illustrates the standard 2-dimensional simplex:
\[
\begin{tikzpicture}[scale=1.6,baseline=-3,decoration={
markings,
mark=at position .5 with {\arrow{stealth};}}]
\filldraw[color=gray!9] (0:1cm)--(120:1 cm)--(-120:1cm)--cycle;
\draw[auto,postaction={decorate}] (0:1cm) to node[scale=.8,swap] {$f_2(\Delta^1)$}(120:1 cm);
\draw[auto,postaction={decorate}] (120:1 cm) to node[scale=.8,swap] {$f_0(\Delta^1)$}(-120:1 cm);
\draw[auto,postaction={decorate}] (0:1 cm)to node[scale=.8] {$f_1(\Delta^1)$}(-120:1cm);
\node at (0:1.8cm) [scale=.6]  {$f_2\circ f_1(\Delta^0)=f_1\circ f_1(\Delta^0)$} ;
\node at (120:1.15cm)  [scale=.6]  {$f_0\circ f_1(\Delta^0)=f_2\circ f_0(\Delta^0)$} ;
\node at (-120:1.15cm)  [scale=.6]   {$f_0\circ f_0(\Delta^0)=f_1\circ f_0(\Delta^0)$} ;
\node at (0,0)  {$\Delta^2$} ;
 \end{tikzpicture}
 \]
where the arrows correspond to the standard orientation of $\Delta^1$. 

\subsection{Pachner moves}
Choose a splitting
\(
[n]=I\sqcup J
\)
with non-empty $I$ and $J$ of cardinality $p=|I|$ and $q=|J|$ so that $p+q=n+1$. The subsets 
\begin{equation}
\partial_I\Delta^n\equiv\cup_{i\in I}f_i(\Delta^{n-1})\quad\text{and}\quad \partial_J\Delta^n\equiv\cup_{j\in J}f_j(\Delta^{n-1})
\end{equation}
are topological $(n-1)$-dimensional balls sharing a common boundary (homeomorphic to the $(n-2)$-dimensional sphere $S^{n-2}$).
A \emph{Pachner move} of type $(p,q)$ is the operation of replacing a homeomorphic image of $\partial_I\Delta^n$ by  $\partial_J\Delta^n$ (a homeomorphic image thereof) in a triangulated $(n-1)$-dimensional manifold. 

 The Pachner move corresponding to the 
specific splitting $[n]=I_0\sqcup J_0$ with 
\begin{equation}
 I_0=\left\{0,2,\dots,2\left\lfloor \frac n2\right\rfloor\right\}\quad \text{and}\quad 
  J_0=\left\{1,3,\dots,2\left\lceil \frac n2\right\rceil-1\right\}
\end{equation}
 will be called the \emph{distinguished  Pachner move}. Here we use the floor $\lfloor x\rfloor$ (the largest integer not greater than $x$) and the ceiling $\lceil x\rceil $ (the smallest integer not less than $x$) functions. The corresponding partition of $n+1$ is given by the formula
\begin{equation}
 (p_0,q_0)=(|I_0|,|J_0|)=\left(\left\lfloor \frac n2\right\rfloor+1,\left\lceil \frac n2\right\rceil\right).
\end{equation}
The following table gives the values for small $n=d+1$:
\begin{equation}
\begin{array}{c|ccccccc}
d&1&2&3&4&5&6&7\\
\hline
(p_0,q_0)  & (2,1)  &(2,2)  &(3,2)&(3,3)&(4,3)&(4,4) & (5,4) \\
\end{array}
\end{equation}
The following pictures illustrate few Pachner moves in small dimensions.
\begin{align}
 \begin{tikzpicture}[scale=1,baseline=-3]
\filldraw[color=gray!9] (0:1cm)--(90:1 cm)--(180:1cm)--(-90:1cm)--cycle;
\draw[auto] (90:1 cm)--(-90:1cm) ;
\draw (0:1cm)--(90:1 cm)--(180:1 cm)--(-90:1cm)--(0:1cm);
 \end{tikzpicture}\quad &\rightsquigarrow\quad
 \begin{tikzpicture}[scale=1,baseline=-3]
\filldraw[color=gray!9] (0:1cm)--(90:1 cm)--(180:1cm)--(-90:1cm)--cycle;
\draw[auto](180:1 cm)--(0:1cm) ;
\draw (0:1cm)--(90:1 cm)--(180:1 cm)--(-90:1cm)--(0:1cm);
 \end{tikzpicture}
 \qquad \text{Type $(2,2)$, $d=2$}\\
 \begin{tikzpicture}[scale=1,baseline=-3]
\filldraw[color=gray!9] (0:1cm)--(120:1 cm)--(-120:1cm)--cycle;
\draw (0:1cm)--(120:1 cm)--(-120:1 cm)--(0:1cm);
 \end{tikzpicture}\quad&\rightsquigarrow\quad
 \begin{tikzpicture}[scale=1,baseline=-3]
\filldraw[color=gray!9] (0:1cm)--(120:1 cm)--(-120:1cm)--cycle;
\draw (0:1cm)--(120:1 cm)--(-120:1 cm)--(0:1cm)--(0,0)--(120:1cm);
\draw (0,0)--(-120:1 cm);
 \end{tikzpicture}
 \qquad \text{Type $(1,3)$, $d=2$}\\
\begin{tikzpicture}[baseline,scale=0.5]
\draw[fill=gray!9] (2,0)--(0,2)--(-2,0)--(0,-2)--cycle;
\draw (0,2)--(.7,-.7)--(0,-2);
\draw (-2,0)--(.7,-.7)--(2,0);
\draw[dashed] (-2,0)--(2,0);
\draw[dotted] (0,2)--(0,-2);
\end{tikzpicture}
\quad&\rightsquigarrow\quad
\begin{tikzpicture}[baseline,scale=0.5]
\draw[fill=gray!9] (2,0)--(0,2)--(-2,0)--(0,-2)--cycle;
\draw (0,2)--(.7,-.7)--(0,-2);
\draw (-2,0)--(.7,-.7)--(2,0);
\draw[dashed] (-2,0)--(2,0);
\end{tikzpicture}
\qquad \text{Type $(3,2)$, $d=3$}
\end{align}
 The Pachner's theorem \cite{MR1095161}, which states that
 two triangulated PL-manifolds are PL-homeomorphic if and only if they are related by a finite sequence of Pachner moves, allows to construct PL-invariants provided one finds algebraic realizations of the Pachner moves. A typical example are the Turaev--Viro invariants of 3-manifolds \cite{MR1191386} which are based on the theory of 6j-symbols of the representation theory of Hopf algebras giving algebraic realizations of the Pachner moves of the types $(p,5-p)$, $p\in\{1,2,3,4\}$, which are also known as \emph{Matveev--Piergallini moves} in the dual language of special spines of 3-manifolds \cite{MR925096,MR1997069,MR958750}.
 
  \subsection{Combinatorial simplexes}
For algebraic realizations of the Pachner moves, we will find it useful to work with combinatorial simplexes.
By definition,
 a \emph{combinatorial $n$-simplex} $\Delta S$ is an abstract simplicial complex given by the power set of a linearly ordered set $S$ of cardinality $n+1$.
 The \emph{standard combinatorial $n$-simplex} $\Delta[n]$ corresponds to the set $[n]\equiv \{0,1,\dots,n\}$ with the natural linear order. It is clear that for any combinatorial $n$-simplex $\Delta S$ there is a unique bijection $s\colon [n]\to S$, $i\mapsto s_i$,  preserving the linear order.
The \emph{facets} of an $n$-simplex $\Delta S$ are combinatorial $(n-1)$-simplexes  $\partial_i \Delta S$, $i\in [n]$, corresponding to the subsets
\(
S\setminus \{s_i\}
\)
with the natural induced linear order:
\begin{equation}
\partial_i \Delta S= \Delta (S\setminus \{s_i\}).
\end{equation}
Pictorially, one can still think in terms of the topological simplexes, for example, the 2-simplex $\Delta [2]$
can be drawn as follows
\[
 \begin{tikzpicture}[scale=1.5,baseline=-3,decoration={
markings,
mark=at position .5 with {\arrow{stealth};}}]
\filldraw[color=gray!9] (0:1cm)--(120:1 cm)--(-120:1cm)--cycle;
\draw[auto,postaction={decorate}] (0:1cm) to node[swap] {$\partial_2\Delta[2]$}(120:1 cm);
\draw[auto,postaction={decorate}] (120:1 cm) to node[swap] {$\partial_0\Delta[2]$}(-120:1 cm);
\draw[auto,postaction={decorate}] (0:1 cm)to node {$\partial_1\Delta[2]$}(-120:1cm);
\node at (0:1.15cm)  {$0$} ;
\node at (120:1.15cm)  {$1$} ;
\node at (-120:1.15cm)  {$2$} ;
\node at (0,0)  {$\Delta[2]$} ;
 \end{tikzpicture}
 \]
but one should keep in mind that, compared to the face maps \eqref{fm}, the operations of taking the boundary components are composed in the opposite order, so that the identity \eqref{cfm} should be replaced by
\begin{equation}
\partial_j\partial_i=\partial_i\partial_{j+1}\ \text{ if }\  i\le j.
\end{equation}
\section{Algebraic realizations}\label{sec2}
A ``good'' algebraic realization of the Pachner moves in $d$ dimensions should result in a TQFT (with corners) in $d$ dimensions. That means that among other things one should be able, for example,  to associate
  a complex vector space $V(s)$ to each $(d-1)$-simplex $s$, and
 a vector $V(p)\in V(\partial_0p)\otimes\dots\otimes V(\partial_dp)$ to each $d$-simplex $p$ in such a way that if $s^*$ is a $(d-1)$-simplex $s$ with opposite (induced) orientation then the associated vector space should be the dual vector space $V(s)^*=V(s^*)$, and  functorial properties with respect to the gluing operations should also be satisfied.  In particular, for the distinguished Pachner moves, one should have an equality of the form
 \begin{equation}
\left(\otimes_{k<l}\operatorname{Ev}_{2k,2l}\right)\left(\otimes_{i}V(\partial_{2i}u)\right)
=\left(\otimes_{k<l}\operatorname{Ev}_{2k+1,2l+1}\right)\left(\otimes_{j}V(\partial_{2j+1}u)\right)
\end{equation}
where $u$ is a $(d+1)$-simplex and  
\begin{equation}
\operatorname{Ev}_{i,j}\colon V(\partial_i\partial_ju)^*\otimes V(\partial_i\partial_ju)\to \C
\end{equation}
is the operation of contracting between elements of dual vectors spaces. In this way, one arrives at a multilinear algebraic relation on the vectors associated to the $d$-simplexes. The following table gives a list of algebraic structures which enable to realize at least the distinguished Pachner moves in dimensions up to 3:
\begin{equation*}
\begin{tabular}{c||c|c|c}
$d$&1&2&3\\
\hline
algebraic structure&projectors&algebras& bi-algebras
\end{tabular}
\end{equation*}
For example, in the case $d=3$, the distinguished Pachner move of the type $(3,2)$ can be realized by a linear map 
 $S\colon V\otimes V\to V\otimes V$ satisfying the equality
 \begin{equation}\label{pe}
 S_{1,2} S_{1,3} S_{2,3}= S_{2,3} S_{1,2}
\end{equation}
where  $V$ is a vector space and $S_{i,j}$ is an element of the endomorphism algebra of the vector space $V^{\otimes3}$ acting as $S$ in the $i$-th and $j$-th components and identically in the remaining component, e.g. $S_{1,2}=S\otimes\operatorname{id}_V$. Namely, for any oriented 2-simplex (triangle) $t$ we assign the vector space $V(t)=V$ and for any oriented 3-simplex (tetrahedron) $s$ the vector 
\begin{multline}
V(s)=S\in V(\partial_0s)\otimes V(\partial_1s)\otimes V(\partial_2s)\otimes V(\partial_3s)\\=V\otimes V^*\otimes V\otimes V^*=\operatorname{End}(V\otimes V)
\end{multline}
where we use the fact that the induced orientations on the even facets of a simplex are opposite to those on the odd facets.
If $V$ is provided with the structure of a bi-algebra with the multiplication $\nabla_V\colon V\otimes V\to V$ and the co-multiplication $\Delta_V\colon V\to V\otimes V$, then the element
 \begin{equation}
 S=(\operatorname{id}_V\otimes \nabla_V)(\Delta_V\otimes \operatorname{id}_V)\colon V\otimes V\to V\otimes V
\end{equation}
is easily seen to satisfy relation~\eqref{pe}.

\subsection{The case $d=4$} Let $p$  be  an oriented 4-simplex also called \emph{pentachoron}. We will find it convenient to represent a pentachoron diagrammatically as a five valent vertex
\[
  \begin{tikzpicture}[node distance=.5cm,baseline=(x.base),
  hvector/.style={draw=blue!50,fill=blue!20,thick}]
\node (x) [hvector]{$p$};
\node (i) [above left = of x ]{$\partial_0 p$};
\node (j) [above=of x]{$\partial_2 p$};
\node (k) [above right =of x]{$\partial_4 p$};
\node (l) [below left =of x]{$\partial_1 p$};
\node (m) [below right =of x]{$\partial_3 p$};
\draw (x.north west)-- (i)(x)--(j)(x.north east)--(k)(x.south west)--(l)(x.south east)--(m);
 \end{tikzpicture}
 \]
 which allows to represent the 
distinguished Pachner move of the type $(3,3)$ as the following (string) diagrammatic equality 
 \[
  \begin{tikzpicture}[node distance=.4cm,baseline=(y.base),
  hvector/.style={draw=blue!50,fill=blue!20,thick}]
\node (i) [scale=.75]{$u_{01}$};
\node (l) [scale=.75,right=of i]{$u_{03}$};
\node (m) [scale=.75,right =of l]{$u_{05}$};
\node (j) [scale=.75,right =of m]{$u_{23}$};
\node (n) [scale=.75,right =of j]{$u_{25}$};
\node (k) [scale=.75,right =of n]{$u_{45}$};
\node (x) [scale=.75,hvector,below =of l]{$u_0$};
\node (y) [scale=.75,hvector,below right=of x]{$u_2$};
\node (z) [scale=.75,hvector,below=of y]{$u_4$};
\node (q) [scale=.75,below =of z ]{$u_{14}$};
\node (r) [scale=.75,right =of q]{$u_{34}$};
\node (p) [scale=.75,left =of q]{$u_{12}$};
\draw (x.north west)-- (i)(x.north)--(l)(x.north east)--(m)(x.south west)--(y.north west)(x.south east)--(z.north west)(y.north)--(j)(y.north east)--(n)(y.south west)--(p)(y.south east)--(z.north)(z.north east)--(k)
(z.south west)--(q)(z.south east)--(r);
 \end{tikzpicture}
 =
  \begin{tikzpicture}[node distance=.4cm,baseline=(y.base),
  hvector/.style={draw=blue!50,fill=blue!20,thick}]
\node (i) [scale=.75]{$u_{01}$};
\node (l) [scale=.75,right=of i]{$u_{03}$};
\node (m) [scale=.75,right =of l]{$u_{05}$};
\node (j) [scale=.75,right =of m]{$u_{23}$};
\node (n) [scale=.75,right =of j]{$u_{25}$};
\node (k) [scale=.75,right =of n]{$u_{45}$};
\node (x) [scale=.75,hvector,below =of n]{$u_5$};
\node (y) [scale=.75,hvector,below left=of x]{$u_3$};
\node (z) [scale=.75,hvector,below=of y]{$u_1$};
\node (q) [scale=.75,below =of z ]{$u_{14}$};
\node (r) [scale=.75,right =of q]{$u_{34}$};
\node (p) [scale=.75,left =of q]{$u_{12}$};
\draw (x.north west)-- (m)(x.north)--(n)(x.north east)--(k)(x.south west)--(z.north east)(x.south east)--(y.north east)(y.north)--(j)(y.north west)--(l)(y.south east)--(r)(y.south west)--(z.north)(z.north west)--(i)
(z.south west)--(p)(z.south east)--(q);
 \end{tikzpicture}
 \]
 where $u$ is a 5-simplex with the notation
 \[
u_i\equiv\partial_iu,\quad u_{ij}\equiv\partial_i\partial_ju,
\]
and the  internal edges correspond to internal shared tetrahedra. Specifically, the internal edge connecting the pentachora $u_i$ and $u_j$ with $i<j$ correspond to the common internal tetrahedron $u_{ij}$ shared between those pentachora. 

Using the fact that the induced  orientations on even facets are opposite to those on odd facets, the   vector $V(p)\in V(\partial_0p)\otimes\dots\otimes V(\partial_4p)$ can be thought of as a linear map
\begin{equation*}
V(p)\colon V(\partial_1p^*)\otimes V(\partial_3p^*)\to V(\partial_0p)\otimes V(\partial_2p)\otimes V(\partial_4p)
\end{equation*}
with a similar string diagrammatical notation as for an abstract pentachoron
 \[
  \begin{tikzpicture}[node distance=.5cm,baseline=(x.base),
  hvector/.style={draw=blue!50,fill=blue!20,thick}]
\node (x) [hvector]{$V(p)$};
\node (i) [above left = of x ]{$V(\partial_0 p)$};
\node (j) [above=of x]{$V(\partial_2 p)$};
\node (k) [above right =of x]{$V(\partial_4 p)$};
\node (l) [below left =of x]{$V(\partial_1 p^*)$};
\node (m) [below right =of x]{$V(\partial_3 p^*)$};
\draw (x.north west)-- (i)(x)--(j)(x.north east)--(k)(x.south west)--(l)(x.south east)--(m);
 \end{tikzpicture}
 \]
 In the simplest case of a constant realization 
  \begin{equation}
 V(p)=Q\ \equiv\begin{tikzpicture}[node distance=.3cm,baseline=(x.base),
  hvector/.style={draw=blue!50,fill=blue!20,thick}]
\node (x) [hvector]{$Q$};
\node (i) [above left = of x ]{
};
\node (j) [above=of x]{
};
\node (k) [above right =of x]{
};
\node (l) [below left =of x]{
};
\node (m) [below right =of x]{
};
\draw (x.north west)-- (i)(x)--(j)(x.north east)--(k)(x.south west)--(l)(x.south east)--(m);
 \end{tikzpicture}\colon V\otimes V\to V\otimes V\otimes V 
  \end{equation}
 we arrive at a graphical equation to be called \emph{Pachner (3,3)-relation}
  \begin{equation}\label{p33rg}
 \begin{tikzpicture}[node distance=.4cm,baseline=(y.base),
  hvector/.style={draw=blue!50,fill=blue!20,thick}]
\node (i)[scale=.75]{};
\node (l) [scale=.75,right=of i]{};
\node (m) [scale=.75,right =of l]{};
\node (j) [scale=.75,right =of m]{};
\node (n) [scale=.75,right =of j]{};
\node (k) [scale=.75,right =of n]{};
\node (x) [scale=.75,hvector,below =of l]{$Q$};
\node (y) [scale=.75,hvector,below right=of x]{$Q$};
\node (z) [scale=.75,hvector,below=of y]{$Q$};
\node (q) [scale=.75,below =of z ]{};
\node (r) [scale=.75,right =of q]{};
\node (p) [scale=.75,left =of q]{};
\draw (x.north west)-- (i)(x.north)--(l)(x.north east)--(m)(x.south west)--(y.north west)(x.south east)--(z.north west)(y.north)--(j)(y.north east)--(n)(y.south west)--(p)(y.south east)--(z.north)(z.north east)--(k)
(z.south west)--(q)(z.south east)--(r);
 \end{tikzpicture}
 =
  \begin{tikzpicture}[node distance=.4cm,baseline=(y.base),
  hvector/.style={draw=blue!50,fill=blue!20,thick}]
\node (i)[scale=.75]{};
\node (l) [scale=.75,right=of i]{};
\node (m) [scale=.75,right =of l]{};
\node (j) [scale=.75,right =of m]{};
\node (n) [scale=.75,right =of j]{};
\node (k) [scale=.75,right =of n]{};
\node (x) [scale=.75,hvector,below =of n]{$Q$};
\node (y) [scale=.75,hvector,below left=of x]{$Q$};
\node (z) [scale=.75,hvector,below=of y]{$Q$};
\node (q) [scale=.75,below =of z ]{};
\node (r) [scale=.75,right =of q]{};
\node (p) [scale=.75,left =of q]{};
\draw (x.north west)-- (m)(x.north)--(n)(x.north east)--(k)(x.south west)--(z.north east)(x.south east)--(y.north east)(y.north)--(j)(y.north west)--(l)(y.south east)--(r)(y.south west)--(z.north)(z.north west)--(i)
(z.south west)--(p)(z.south east)--(q);
 \end{tikzpicture}
\end{equation}
whose analytic form reads as
\begin{multline}\label{p33ra}
\left(Q\sigma\otimes\operatorname{id}_{V^{\otimes3}}\right)\left(\operatorname{id}_{V}\otimes Q\otimes \operatorname{id}_{V}\right)
\left(\sigma\otimes\operatorname{id}_{V^{\otimes2}}\right)\left(\operatorname{id}_{V}\otimes Q\right)\\
=\left(\operatorname{id}_{V^{\otimes2}}\otimes \sigma\otimes \operatorname{id}_{V^{\otimes2}}\right)
\left(\operatorname{id}_{V^{\otimes3}}\otimes Q\sigma\right)\left(\operatorname{id}_{V}\otimes Q\otimes \operatorname{id}_{V}\right)\left(\operatorname{id}_{V^{\otimes2}}\otimes \sigma\right)\left(Q\otimes \operatorname{id}_{V}\right)
\end{multline}
where $\sigma=\sigma_{V,V}$ is the permutation operator defined by 
\begin{equation}\label{perm}
\sigma_{X,Y}\colon X\otimes Y\to Y\otimes X,\ x\otimes y\mapsto y\otimes x,\quad \forall x\in X,\ \forall y\in Y.
\end{equation}
We  can also use the same  graphical notation
 \[
  Q^{i,j,k}_{l,m}\ \equiv \begin{tikzpicture}[node distance=.3cm,baseline=(x.base),
  hvector/.style={draw=blue!50,fill=blue!20,thick}]
\node (x) [hvector]{$Q$};
\node (i) [above left = of x ]{$i$};
\node (j) [above=of x]{$j$};
\node (k) [above right =of x]{$k$};
\node (l) [below left =of x]{$l$};
\node (m) [below right =of x]{$m$};
\draw (x.north west)-- (i)(x)--(j)(x.north east)--(k)(x.south west)--(l)(x.south east)--(m);
 \end{tikzpicture}
 \in \mathbb{C}
 \]
 for the \emph{matrix coefficients} 
 \[
 Q^{i,j,k}_{l,m}\equiv\langle Q, \omega^i\otimes e_l\otimes\omega^j\otimes e_m\otimes\omega^k\rangle=\langle \omega^i\otimes\omega^j\otimes\omega^k,Q(e_l\otimes e_m)\rangle
 \]
 associated with dual linear bases $\{e_i\}\subset V$ and $\{\omega^j\}\subset V^*$.
 In this way, we obtain a coordinate form of the Pachner (3,3)-relation given by a system of non-linear algebraic equations
 \[
  \begin{tikzpicture}[node distance=.4cm,baseline=(y.base),
  hvector/.style={draw=blue!50,fill=blue!20,thick}]
\node (i)[scale=.75]{$i$};
\node (l) [scale=.75,right=of i]{$l$};
\node (m) [scale=.75,right =of l]{$m$};
\node (j) [scale=.75,right =of m]{$j$};
\node (n) [scale=.75,right =of j]{$n$};
\node (k) [scale=.75,right =of n]{$k$};
\node (x) [scale=.75,hvector,below =of l]{$Q$};
\node (y) [scale=.75,hvector,below right=of x]{$Q$};
\node (z) [scale=.75,hvector,below=of y]{$Q$};
\node (q) [scale=.75,below =of z ]{$q$};
\node (r) [scale=.75,right =of q]{$r$};
\node (p) [scale=.75,left =of q]{$p$};
\draw (x.north west)-- (i)(x.north)--(l)(x.north east)--(m)(x.south west)--(y.north west)(x.south east)--(z.north west)(y.north)--(j)(y.north east)--(n)(y.south west)--(p)(y.south east)--(z.north)(z.north east)--(k)
(z.south west)--(q)(z.south east)--(r);
 \end{tikzpicture}
 =
  \begin{tikzpicture}[node distance=.4cm,baseline=(y.base),
  hvector/.style={draw=blue!50,fill=blue!20,thick}]
\node (i)[scale=.75]{$i$};
\node (l) [scale=.75,right=of i]{$l$};
\node (m) [scale=.75,right =of l]{$m$};
\node (j) [scale=.75,right =of m]{$j$};
\node (n) [scale=.75,right =of j]{$n$};
\node (k) [scale=.75,right =of n]{$k$};
\node (x) [scale=.75,hvector,below =of n]{$Q$};
\node (y) [scale=.75,hvector,below left=of x]{$Q$};
\node (z) [scale=.75,hvector,below=of y]{$Q$};
\node (q) [scale=.75,below =of z ]{$q$};
\node (r) [scale=.75,right =of q]{$r$};
\node (p) [scale=.75,left =of q]{$p$};
\draw (x.north west)-- (m)(x.north)--(n)(x.north east)--(k)(x.south west)--(z.north east)(x.south east)--(y.north east)(y.north)--(j)(y.north west)--(l)(y.south east)--(r)(y.south west)--(z.north)(z.north west)--(i)
(z.south west)--(p)(z.south east)--(q);
 \end{tikzpicture}
 \]
where summations over the implicit indices on the internal edges are assumed. The explicit analytic form of these equations reads as follows:
\begin{equation}\label{p33r}
\sum_{s,t,u}Q^{i,l,m}_{s,t}Q^{s,j,n}_{p,u}Q^{t,u,k}_{q,r}=\sum_{s,t,u}Q^{m,n,k}_{s,t}Q^{l,j,t}_{u,r}Q^{i,u,s}_{p,q}.
\end{equation}
\subsection{Interpretations}
Sticking with chosen dual linear bases $\{e_i\}\subset V$ and $\{\omega^j\}\subset V^*$, define elements $ X^i,Y^j,Z^k\in\operatorname{End}(V\otimes V)$ through the equalities
\[
 \begin{tikzpicture}[node distance=.3cm,baseline=(x.base),
  hvector/.style={draw=blue!50,fill=blue!20,thick}]
\node (x) [scale=.75,hvector]{$Q$};
\node (i) [scale=.75,above left = of x ]{$i$};
\node (j) [scale=.75,above=of x]{$j$};
\node (k) [scale=.75,above right =of x]{$k$};
\node (l) [scale=.75,below left =of x]{$l$};
\node (m) [scale=.75,below right =of x]{$m$};
\draw (x.north west)-- (i)(x)--(j)(x.north east)--(k)(x.south west)--(l)(x.south east)--(m);
 \end{tikzpicture}
 = \begin{tikzpicture}[node distance=.3cm,baseline=(x.base),
  hvector/.style={draw=blue!50,fill=blue!20,thick}]
\node (x) [scale=.75,hvector]{$X^i$};
\node (i) [scale=.75,above left = of x ]{$j$};
\node (k) [scale=.75,above right =of x]{$k$};
\node (l) [scale=.75,below left =of x]{$l$};
\node (m) [scale=.75,below right =of x]{$m$};
\draw (x.north west)-- (i)(x.north east)--(k)(x.south west)--(l)(x.south east)--(m);
 \end{tikzpicture}
 =
  \begin{tikzpicture}[node distance=.3cm,baseline=(x.base),
  hvector/.style={draw=blue!50,fill=blue!20,thick}]
\node (x) [scale=.75,hvector]{$Y^j$};
\node (i) [scale=.75,above left = of x ]{$i$};
\node (k) [scale=.75,above right =of x]{$k$};
\node (l) [scale=.75,below left =of x]{$l$};
\node (m) [scale=.75,below right =of x]{$m$};
\draw (x.north west)-- (i)(x.north east)--(k)(x.south west)--(l)(x.south east)--(m);
 \end{tikzpicture}
 =
  \begin{tikzpicture}[node distance=.3cm,baseline=(x.base),
  hvector/.style={draw=blue!50,fill=blue!20,thick}]
\node (x) [scale=.75,hvector]{$Z^k$};
\node (i) [scale=.75,above left = of x ]{$i$};
\node (k) [scale=.75,above right =of x]{$j$};
\node (l) [scale=.75,below left =of x]{$l$};
\node (m) [scale=.75,below right =of x]{$m$};
\draw (x.north west)-- (i)(x.north east)--(k)(x.south west)--(l)(x.south east)--(m);
 \end{tikzpicture}
\]
where we use the graphical notation for the matrix coefficents
\begin{equation}
 \begin{tikzpicture}[node distance=.3cm,baseline=(x.base),
  hvector/.style={draw=blue!50,fill=blue!20,thick}]
\node (x) [scale=.75,hvector]{$W$};
\node (i) [scale=.75,above left = of x ]{$i$};
\node (k) [scale=.75,above right =of x]{$j$};
\node (l) [scale=.75,below left =of x]{$k$};
\node (m) [scale=.75,below right =of x]{$l$};
\draw (x.north west)-- (i)(x.north east)--(k)(x.south west)--(l)(x.south east)--(m);
 \end{tikzpicture}
\equiv \ \langle \omega^i\otimes \omega^j,W (e_k\otimes e_l)\rangle,\quad \forall\  W\in \operatorname{End}(V\otimes V).
\end{equation}
Then we can write the coordinate form~\eqref{p33r} of the Pachner $(3,3)$-relation in equivalent forms as either
\begin{equation}\label{pe1}
\sum_{s,t}Q^{i,l,m}_{s,t}X^{s}_{1,2}X^{t}_{2,3}=X^{m}_{2,3}X^l_{1,3}X^{i}_{1,2}
\end{equation}
or 
\begin{equation}\label{pe2}
Z^{m}_{1,2}Z^{n}_{1,3}Z^{k}_{2,3}=\sum_{s,t}Q^{m,n,k}_{s,t}Z^{t}_{2,3}Z^{s}_{1,2}
\end{equation}
or else
\begin{equation}\label{ybe}
X^i_{1,2}Y^j_{1,3}Z^k_{2,3}=Z^k_{2,3}Y^j_{1,3}X^i_{1,2}
\end{equation}
where the lower indices carry the matrix meaning (see the explanation right after equation~\eqref{pe}) while the upper indices enumerate the elements of the dual linear bases of $V$ and $V^*$.
If equations~\eqref{pe1} and \eqref{pe2} can naturally be interpreted as categorified versions of the pentagon equation~\eqref{pe}, 
equations~\eqref{ybe} are easily recognized as a family of Yang--Baxter equations, and it is an interesting open question if one can produce nontrivial solutions to the Pachner $(3,3)$-relation by studying some known (families of) solutions of the Yang--Baxter equation.
\subsection{Solutions}
Let $V$ be a vector space together with a linear map  $\mu\colon V\otimes V\to V$ satisfying the associativity condition 
\begin{equation}
\mu(\mu\otimes\operatorname{id}_V)=\mu(\operatorname{id}_V\otimes\mu)
\end{equation}
and  two linear maps $\lambda,\rho\colon V\to V\otimes V$ satisfying the co-associativity condition
\begin{equation}
(\operatorname{id}_V\otimes x)x=(x\otimes\operatorname{id}_V)x,\quad x\in\{\lambda,\rho\},
\end{equation}
 and suppose that all three maps are pairwise mutually compatible in the sense that any of them is a morphism of the algebraic structures defined by two others, i.e. the following equations are satisfied:
 \begin{equation}
\mu x=(x\otimes x)(\operatorname{id}_V\otimes\sigma\otimes\operatorname{id}_V)(\mu\otimes\mu),\quad x\in\{\lambda,\rho\},
\end{equation}
the compatibility of $\mu$ with $\lambda$ and $\rho$, and
\begin{equation}
\lambda(\rho\otimes\rho)=\rho(\lambda\otimes\lambda)(\operatorname{id}_V\otimes\sigma\otimes\operatorname{id}_V),
\end{equation}
the compatibility of $\lambda$ with $\rho$, where $\sigma=\sigma_{V,V}$ is the permutation operator defined in \eqref{perm}.
  Then the map
\begin{equation}
 Q=(\id_V\otimes\mu\otimes\id_V)(\lambda\otimes\rho)\colon V\otimes V\to V\otimes V\otimes V
\end{equation}
satisfies  the Pachner (3,3)-relation~\eqref{p33rg}, \eqref{p33ra}.
In particular, one can take a vector space $V$ provided with the structure of a co-commutative bi-algebra,
with the multiplication $\nabla_V\colon V\otimes V\to V$ and the co-multiplication $\Delta_V\colon V\to V\otimes V$, and choose $\mu=\nabla_V$ and $\lambda=\rho=\Delta_V$.

By putting oneself into the set theoretical framework,  instead of the category of vector spaces, we identify a set theoretical solution 
\begin{equation}
 Q\colon I^2\to I^3,\quad (x,y)\mapsto \left(\frac{x-xy}{1-xy},xy,\frac{y-xy}{1-xy}\right)
\end{equation}
 where $I=]0,1[\subset\R$ is the open unit interval. The idea of why this map is a solution comes from the fact that exactly those rational expressions appear as arguments of the dilogarithm functions in the five term Rogers identity.

Finally, for any abelian group $A$ and  a bi-character $\chi\colon A\times A\to\mathbb{C}$, we have a solution in the form of a linear operator between the vector spaces of complex valued functions on $A^2$ and $A^3$:
\begin{equation}
Q\colon \C^{A^2}\to \C^{A^3},\quad (Qf)(x,y,z)=\chi(x,z)f(x+y,y+z).
\end{equation}
One can define the matrix coefficients for that solution if $A$ is a locally compact abelian (abbreviated as LCA) group so that the matrix coefficients make sense as tempered distributions over $A^5$:  
 \begin{equation}\label{lcas}
 Q^{x,y,z}_{u,v}=D(x,u,y,v,z)\equiv\chi(x,z)\delta(x-u+y)\delta(y-v+z),\quad \forall (x,y,z,u,v)\in A^5,
\end{equation}
where $\delta(x)$ is Dirac's delta distribution over $A$ defined by
\begin{equation}
\int_A \delta(x)f(x)\operatorname{d}\! x=f(0)
\end{equation}
with a chosen Haar measure $\operatorname{d}\! x$ on $A$ and 
 any Schwartz--Bruhat test function $f\colon A\to \C$. One can easily check that equations~\eqref{p33r} are satisfied by \eqref{lcas}, provided the summations are interpreted as integrations with respect to the Haar measure. 
 \section{P-symmetries}\label{sec3}
 
 Let $X$ be a complex vector space and $Y\subset X^*$ a subspace of the corresponding dual vector space. We say that a linear map $A\colon Y\to X$ is \emph{symmetric} if the associated bilinear form on $Y$ is symmetric, i.e.
 \begin{equation}
\langle v,Aw\rangle =\langle w,Av\rangle,\quad \forall v,w\in Y.
\end{equation}
We also say that a vector 
$
Q\in Y\otimes X\otimes Y\otimes X\otimes Y
$
is \emph{P-symmetric} if there exist three symmetric linear isomorphisms
\begin{equation}
L,M,R \colon Y\to X
\end{equation}
such that the following equalities are satisfied:
\begin{multline}\label{eq:sym-rel}
(\sigma_{Y,X}\otimes L\otimes L^{-1}\otimes L)Q=
(L\otimes \sigma_{X,Y} \otimes M^{-1}\otimes M)Q\\
=(M\otimes M^{-1}\otimes  \sigma_{Y,X}\otimes R) Q
=(R\otimes R^{-1}\otimes R\otimes   \sigma_{X,Y} )Q
\end{multline}
where the permutation operators $\sigma_{X,Y}$ are defined in \eqref{perm}. If a P-symmetric vector $Q$ satisfies the Pachner (3,3)-relation\footnote{through the interpretation as a linear map $Q\colon Y^{\otimes2}\to Y^{\otimes3}$}, thus realizing algebraically  the distinguished Pachner move of the type (3,3), then it can be used for realization of all other Pachner moves of the same type. Likewise, for realization of any of the Pachner moves of the type $(q,6-q)$ with $q\in \{1,2,4,5\}$ it suffices to  realize  one Pachner move of the type $(1,5)$ and one of the type $(2,4)$ with a P-symmetric vector, see \cite{Kashaev2014(2)} for an example of P-symmetric solutions in a particular case of finite cyclic groups.
\subsection{P-symmetric solutions over self-dual LCA groups} Here we give a functional analytic example of P-symmetric solutions of the Pachner (3,3)-relation which generalizes the solution described in  \cite{Kashaev2014(2)}.

Let $\T$ denote the complex circle group. For any LCA  group $A$ we denote by  $\hat A$ its Pontryagin dual and by $\langle \xi;x\rangle$ the (duality) pairing between  $\xi \in\hat A$ and $x\in A$. Let $A$ be a self-dual LCA group. That means that $\hat A$ is isomorphic to $A$. We assume that one can choose an isomorphism $f\colon A\to \hat A$ so that there exists a function $g\colon A\to \T$ which is symmetric $g(x)=g(-x)$ and trivializes the group cocycle $\langle f(x);y\rangle$, i.e.
\begin{equation}
\langle f(x);y\rangle={g(x)g(y)\over g(x+y)}.
\end{equation}
By choosing the unique normalization of the Haar measure over $A$ such that
\begin{equation}\label{di}
\int_A\langle f(x);y\rangle\operatorname{d}\! y=\delta(x),
\end{equation}
we remark that the following symmetric tempered distributions over $A^2$,
 \begin{equation}
S(x,y)=g(x-y),\quad T(x,y)=\delta(x+y)g( x),
\end{equation}
when interpreted as integral operator kernels, correspond to unitary operators in the pre-Hilbert space of Schwartz--Bruhat test functions over $A$, and thus, through transposition, determine continuous linear automorphisms $S$ and $T$ of the space of tempered distributions over $A$ where the integral kernels of the inverse automorphisms $S^{-1}$ and $T^{-1}$ are given by the complex conjugate kernels of $S$ and $T$ respectively.

The following theorem allows us to interpret the distribution valued solution of the Pachner (3,3)-relation given by \eqref{lcas} with a specific choice of the bi-character as a P-symmetric functional theoretic solution where the vector spaces $X$ and $Y$ are identified with one and the same space of tempered distributions over $A$, the tensor powers $X^{\otimes n}$ being identified with the space of tempered distributions over $A^n$, and the transformation operators $L$, $M$, $R$ are fixed by $L=R=T$ and $M=S$.
\begin{theorem}\label{thm}
 Let $A,f,g, T,S$ be as above and $D$ a tempered distribution over $A^5$ defined  by  \eqref{lcas} where
\begin{equation}
\chi(x,y)={\langle f(x);y\rangle}.
\end{equation}
Then,  the following equalities are satisfied:
\begin{multline}\label{lcasr}
\bar D=(\sigma\otimes T\otimes T^{-1}\otimes T)D=
(T\otimes \sigma \otimes S^{-1}\otimes S)D\\
=(S\otimes S^{-1}\otimes  \sigma\otimes T) D
=(T\otimes T^{-1}\otimes T\otimes   \sigma )D
\end{multline}
where $\bar D$ is the complex conjugate of $D$ and $\sigma$ is the permutation automorphism of the space of tempered distributions over $A^2$.
\end{theorem}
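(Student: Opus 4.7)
The plan is to verify each of the four identities in \eqref{lcasr} by direct computation with integral kernels, using the defining formula \eqref{lcas} for $D$, the explicit kernels $T(x,y)=\delta(x+y)g(x)$ and $S(x,y)=g(x-y)$ (with $T^{-1}$ and $S^{-1}$ having the complex conjugate kernels), the symmetry $g(-x)=g(x)$ together with $|g|=1$, the cocycle trivialization $\chi(x,y)=g(x)g(y)/g(x+y)$, and the Fourier-orthogonality relation \eqref{di}.

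First I write out each right-hand side as a multiple integral of $D$ against a tensor product of the above kernels. For the identities $\bar D = (\sigma\otimes T\otimes T^{-1}\otimes T)D$ and $\bar D = (T\otimes T^{-1}\otimes T\otimes\sigma)D$ every operator kernel is $\delta$-supported, so all three integrations collapse to point evaluations. The two delta factors inherited from $D$ then become those of $\bar D$ (up to an immaterial sign in the argument), and the remaining scalar prefactor is a product of three $g$-values times one $\chi$-value which I simplify, via the cocycle identity and unitarity of $g$, to $\overline{\chi(x,z)}$.

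For the identities $\bar D = (T\otimes\sigma\otimes S^{-1}\otimes S)D$ and $\bar D = (S\otimes S^{-1}\otimes\sigma\otimes T)D$, the $\delta$-supported operators together with one of the two $\delta$'s in $D$ kill two of the three integration variables, leaving a single remaining integral of the shape
\begin{equation*}
\int_A \chi(a,u)\, g(u)\, \overline{g(u+c)}\, \operatorname{d}\! u,
\end{equation*}
where $a$ and $c$ depend linearly on the outer arguments. Using $g(u+c)=g(u)g(c)/\chi(u,c)$ and $|g(u)|=1$, the integrand collapses to $\overline{g(c)}\,\chi(u,a+c)$, after which \eqref{di} evaluates the integral to $\overline{g(c)}\,\delta(a+c)$. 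This new delta function joins the one still surviving from $D$ to produce the two deltas appearing in $\bar D$, and the identity $\chi(-x,y)=\overline{\chi(x,y)}$ (a consequence of $f$ being a group homomorphism into $\hat A$) combined with the remaining $g$-factors pins down the scalar coefficient as $\overline{\chi(x,z)}$.

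The only real obstacle is careful bookkeeping: there are five outer arguments, three integration variables, and four separate identities to verify. A single representative computation for the first identity already exhibits all of the key simplifications, and the remaining three follow an identical pattern, with the fourth mirroring the first and the third mirroring the second under the involution that reverses the order of the five tensor factors.
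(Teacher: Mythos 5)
Your proposal is correct and follows essentially the same route as the paper's proof: a direct kernel computation for each of the four identities, in which the $\delta$-supported kernels of $T$, $T^{\pm1}$ collapse the integrations outright (cases with only $T$'s), while the cases involving $S^{\pm1}$ reduce to a single integral evaluated via the cocycle identity $\langle f(x);y\rangle=g(x)g(y)/g(x+y)$ and the normalization \eqref{di}. Your normal form $\int_A\chi(a,u)\,g(u)\,\overline{g(u+c)}\,\mathrm{d}u$ is exactly what the paper obtains after its shift of the integration variable, so the two arguments coincide in substance.
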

\begin{proof}
Denoting
 \begin{multline}
D^{(1)}\equiv (\sigma\otimes T\otimes T^{-1}\otimes T)D,\quad
D^{(2)}\equiv(T\otimes \sigma \otimes S^{-1}\otimes S)D,\\
D^{(3)}\equiv(S\otimes S^{-1}\otimes  \sigma\otimes T) D,\quad
D^{(4)}\equiv(T\otimes T^{-1}\otimes T\otimes   \sigma )D,
\end{multline}
we check by direct calculation that $D^{(i)}=\bar D$ for each $i\in\{1,2,3,4\}$.
 
The case $i=1$:
\begin{multline}
 D^{(1)}(x,u,y,v,z)=\int_{A^3}T(y,y')\bar T(v,v') T(z,z')D(u,x,y',v',z')\operatorname{d}\! y'\operatorname{d}\! v'\operatorname{d}\! z'\\
 ={g(y)g(z)\over g(v)}D(u,x,-y,-v,-z)={g(y)g(z)\over g(v)\chi(u,z)}\delta(u-x-y)\delta(-y+v-z)\\
={\chi( y,z)\over\chi(u,z)}\delta(u-x-y)\delta(-y+v-z)=\bar\chi( x,z)\delta(u-x-y)\delta(-y+v-z).
\end{multline}
The case $i=2$:
\begin{multline}
 D^{(2)}(x,u,y,v,z)=\int_{A^3}T(x,x')\bar S(v,v') S(z,z')D(x',y,u,v',z')\operatorname{d}\! x'\operatorname{d}\! v'\operatorname{d}\! z'\\
 =\int_{A^2}{g(x)g(z-z')\over g(v-v')}D(-x,y,u,v',z')\operatorname{d}\! v'\operatorname{d}\! z'\\
 =\int_{A^2}{g(x)g(z-z')\over g(v-v')\chi(x,z')}\delta(-x-y+u)\delta(u-v'+z')\operatorname{d}\! v'\operatorname{d}\! z'\\
  =\delta(-x-y+u)\int_{A}{g(x)g(z-z')\over g(v-u-z')\chi(x,z')}\operatorname{d}\! z'
  \end{multline}
and we continue by shifting the integration variable $z'\mapsto z'+z$ and using formula \eqref{di}
\begin{multline}
    =\delta(-x-y+u)\int_{A}{g(x)g(-z')\over g(v-u-z-z')\chi(x,z'+z)}\operatorname{d}\! z'\\
      =\delta(-x-y+u){g(x)\over g(v-u-z)\chi(x,z)}\int_{A}\chi(-x+u+z-v,z')\operatorname{d}\! z'\\
      =\delta(-x-y+u){\bar\chi(x,z)g(x)\over g(v-u-z)}\delta(-x+u+z-v)
      =\delta(-x-y+u)\bar\chi(x,z)\delta(y+z-v).
\end{multline}
The case $i=3$:
\begin{multline}
 D^{(3)}(x,u,y,v,z)=\int_{A^3}S(x,x')\bar S(u,u') T(z,z')D(x',u',v,y,z')\operatorname{d}\! x'\operatorname{d}\! u'\operatorname{d}\! z'\\
 =\int_{A^2}{g(x-x')g(z)\over g(u-u')}D(x',u',v,y,-z) \operatorname{d}\! x'\operatorname{d}\! u'\\
  =\int_{A^2}{g(x-x')g(z)\over g(u-u')\chi(x',z)}\delta(x'-u'+v)\delta(v-y-z) \operatorname{d}\! x'\operatorname{d}\! u'\\
   =\delta(v-y-z)\int_{A}{g(x-x')g(z)\over g(u-v-x')\chi(x',z)} \operatorname{d}\! x'
     \end{multline}
and we continue by shifting the integration variable $x'\mapsto x'+x$ and using formula \eqref{di}
\begin{multline}
   =\delta(v-y-z)\int_{A}{g(-x')g(z)\over g(u-v-x-x')\chi(x'+x,z)} \operatorname{d}\! x'\\
    =\delta(v-y-z){\bar\chi(x,z)g(z)\over g(u-v-x)} \int_{A}\chi(z+u-v-x,-x')\operatorname{d}\! x'\\
     =\delta(v-y-z){\bar\chi(x,z)g(z)\over g(u-v-x)}\delta(z+u-v-x) =\delta(v-y-z)\bar\chi(x,z)\delta(u-y-x).
\end{multline}
The case $i=4$:
\begin{multline}
 D^{(4)}(x,u,y,v,z)=\int_{A^3}T(x,x')\bar T(u,u') T(y,y')D(x',u',y',z,v)\operatorname{d}\! x'\operatorname{d}\! u'\operatorname{d}\! y'\\
 ={g(x)g(y)\over g(u)} D(-x,-u,-y,z,v) ={g(x)g(y)\over g(u)\chi(x,v)} \delta(-x+u-y)\delta(-y-z+v)\\
 ={\chi(x,y)\over \chi(x,v)} \delta(-x+u-y)\delta(-y-z+v) =\bar\chi(x,z) \delta(-x+u-y)\delta(-y-z+v).
\end{multline}
\end{proof}

\def\cprime{$'$} \def\cprime{$'$}

  \end{document}